\newtheorem{lem}{Lemma}
\newtheorem{conj}{Conjecture}
\newcommand{\ftwon}{{\mathbb F}_{2^n}}
\newcommand{\ftwom}{{\mathbb F}_{2^m}}
\newcommand{\ftwo}{{\mathbb F}_{2}}
\begin{document}
\title{A conjecture on permutation trinomials over finite fields of characteristic two}
\author{Nian Li
\thanks{Faculty of Mathematics and Statistics, Hubei Key Laboratory of Applied Mathematics, Hubei
University, Wuhan, 430062, China. Email: nian.li@hubu.edu.cn;huqiaoyu@stu.hubu.edu.cn}
\and Qiaoyu Hu
}
\date{}
\maketitle

\begin{abstract}
 In this paper, by analyzing the quadratic factors of an $11$-th degree polynomial over the finite field $\ftwon$, a conjecture on permutation trinomials over $\ftwon[x]$ proposed very recently by Deng and Zheng  is settled, where $n=2m$ and $m$ is a positive integer with $\gcd(m,5)=1$.
\end{abstract}

\section{Introduction}

 Let $q$ be a prime power and $\mathbb{F}_{q}$ denote the finite field with $q$ elements. A polynomial $f(x)$ over $\mathbb{F}_{q}$ is called a permutation polynomial if the induced mapping $f:c\mapsto f(c)$ from $\mathbb{F}_{q}$ to itself is a bijection \cite{lidl1997}. Permutation polynomials have been studied for several decades and have important applications in a wide range of areas such as coding theory \cite{DT2013,Y2007}, combinatorial designs \cite{DY2006} and cryptography \cite{RS1987,SH1988}.

The construction of permutation polynomials with a simple algebraic form is an interesting research problem and it has already attracted researchers' much attention in recent years. By using certain techniques in dealing with equations or polynomials over finite fields, a number of permutation polynomials with a simple form have been obtained, the reader is referred to \cite{cding,gupta,hou3,hou4,H2015,likangquan,linian,MZFG,Tu-zeng-h,Tu-zeng-j,dwu,zha} and the references therein. Motivated by the observation that more than half of the known permutation binomials and trinomials were constructed from Niho exponents, Li and Helleseth \cite{linian} aimed to investigate permutation trinomials over $\ftwon[x]$ of the form
\begin{equation}\label{e1}
f(x)=x+x^{s(2^m-1)+1}+x^{t(2^m-1)+1},
\end{equation}
where $n=2m$, and $s$, $t$ are two integers, and consequently, four classes of permutation trinomials over $\ftwon[x]$ with the form \eqref{e1} were obtained in \cite{linian} based on some subtle manipulation of solving equations with low degree over finite fields, and another two classes of such permutations were presented in \cite{linian2} by virtue of the property of linear fractional polynomials over finite fields. Meanwhile, some similar and more general results on permutation trinomials over $\ftwon[x]$ were also obtained in \cite{gupta,Li-QLF}. For the permutation polynomials from Niho exponents, the reader is referred to \cite{Tu-zeng,Tu-zh,Tu-zlh} for some recent results and to a survey paper \cite{li-zeng}. Very recently, followed the work of \cite{linian}, by some delicate operation of solving equations with low degrees over finite fields, Deng and Zheng \cite{DZ2018} presented two more classes of permutation trinomials over $\ftwon[x]$ of the form \eqref{e1}, and proposed a conjecture on such a kind of permutation trinomials based on computer experiments. This paper is devoted to settle the conjecture proposed by Deng and Zheng in \cite{DZ2018}.

The remainder of this paper is organized as follows. Section \ref{sec-2} gives some notations and the conjecture proposed in \cite{DZ2018}.  Section \ref{sec-3} proves the conjecture by analyzing the quadratic factors of an $11$-th degree polynomial over the finite field $\ftwon$, and Section \ref{sec-4} concludes this paper.

\section{A conjecture on permutation trinomials of the form \eqref{e1}}\label{sec-2}

A criterion for a polynomial of the form \eqref{e1} to be a permutation polynomial had been characterized by the following lemma which was proved by Park and Lee \cite{Park}, Wang \cite{Wang07} and Zieve \cite{Zieve}.

\noindent
\begin{lem}\label{lem1} \emph{(\cite{Park,Wang07,Zieve})}
Let $q$ be a prime power and $h(x)\in\mathbb{F}_{q}[x]$. If $d, s, r>0$ such that $q-1=ds$, then $x^rh(x^s)$ permutes $\mathbb{F}_{q}$ if and only if

\rm{(1)} $\gcd\left(r,s\right)=1$;

\rm{(2)} $x^rh(x)^s$ permutes the set of the $d$-th roots of unity in $\mathbb{F}_{q}$.

\end{lem}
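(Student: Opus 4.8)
The plan is to derive this criterion from the Akbary--Ghioca--Wang (AGW) commutative-diagram technique, which converts a permutation question on all of $\mathbb{F}_q$ into a permutation question on a smaller set together with a fiberwise injectivity condition. Write $g(x)=x^rh(x^s)$. Since $r>0$ we have $g(0)=0$, so the first step is to reduce matters to showing that $g$ restricts to a bijection of $\mathbb{F}_q^{\ast}$, and I would work on $\mathbb{F}_q^{\ast}$ from then on. Let $\mu_d$ and $\mu_s$ denote the groups of $d$-th and $s$-th roots of unity in $\mathbb{F}_q^{\ast}$. Because $q-1=ds$, the power map $\pi\colon x\mapsto x^s$ sends $\mathbb{F}_q^{\ast}$ onto $\mu_d$ and is exactly $s$-to-one, each fiber $\pi^{-1}(y)$ being a coset of $\mu_s$ of cardinality $s$.

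Next I would set up the diagram. Put $\bar g(y)=y^rh(y)^s$, which is exactly the polynomial $x^rh(x)^s$ appearing in condition (2). For $x\in\mathbb{F}_q^{\ast}$ one has
\[
\pi\bigl(g(x)\bigr)=\bigl(x^rh(x^s)\bigr)^s=(x^s)^r\,h(x^s)^s=\bar g\bigl(\pi(x)\bigr),
\]
so $\pi\circ g=\bar g\circ\pi$ with $\pi$ surjective onto $\mu_d$. The AGW criterion then yields that $g$ permutes $\mathbb{F}_q^{\ast}$ if and only if (a) $\bar g$ permutes $\mu_d$, and (b) $g$ is injective on each fiber $\pi^{-1}(y)$, $y\in\mu_d$. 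It then remains to identify (a) with condition (2) and (b) with condition (1).

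For (b): fixing $y\in\mu_d$ and an element $x_0$ with $x_0^s=y$, so that $\pi^{-1}(y)=x_0\mu_s$, one computes for $\zeta\in\mu_s$ that $g(x_0\zeta)=(x_0\zeta)^rh\bigl((x_0\zeta)^s\bigr)=h(y)\,x_0^r\,\zeta^r$, a fixed scalar times $\zeta^r$. Hence $g$ is injective on the fiber exactly when $\zeta\mapsto\zeta^r$ is injective on the cyclic group $\mu_s$, i.e.\ exactly when $\gcd(r,s)=1$; the key point is that this condition does not depend on $y$, so (b) is equivalent to (1). For (a), the identity $\bar g=x^rh(x)^s$ makes (a) literally condition (2). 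Finally I would tie off the reduction to $\mathbb{F}_q^{\ast}$: if (2) holds then $\bar g$ maps $\mu_d$ into $\mu_d$, hence $h$ cannot vanish anywhere on $\mu_d$; therefore $g(x)=x^rh(x^s)\neq 0$ for every $x\in\mathbb{F}_q^{\ast}$ (since $x^s\in\mu_d$), so $0$ is the unique preimage of $0$ under $g$ and $g$ permutes $\mathbb{F}_q^{\ast}$ if and only if $g$ permutes $\mathbb{F}_q$. Combining the pieces gives the stated equivalence.

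The argument is largely bookkeeping once the diagram is in place. Two points need genuine care: the passage between $\mathbb{F}_q$ and $\mathbb{F}_q^{\ast}$ (which relies on the observation that condition (2) already forces $h$ to be nonvanishing on the $d$-th roots of unity), and the remark that the fiberwise injectivity requirement collapses to the single coprimality condition $\gcd(r,s)=1$ uniformly over all fibers. If one prefers not to invoke AGW by name, the same proof can be carried out directly, checking injectivity and surjectivity of $g$ on $\mathbb{F}_q^{\ast}$ coset by coset along the fibers of $\pi$; this is precisely the content packaged by the commutative diagram.
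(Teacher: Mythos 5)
Your proof is correct. The paper itself quotes this lemma from \cite{Park,Wang07,Zieve} without proof, and your argument --- the commutative diagram $\pi\circ g=\bar g\circ\pi$ for the $s$-th power map $\pi$ onto $\mu_d$, with fiberwise injectivity reducing to $\zeta\mapsto\zeta^r$ on $\mu_s$ (hence to $\gcd(r,s)=1$) and the observation that condition (2) forces $h$ to be nonvanishing on $\mu_d$ --- is exactly the standard multiplicative AGW/coset argument given in those references, so nothing further is needed.
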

 
From now on, let $n=2m$ be a positive integer and denote the $(2^m+1)$-th roots of unity in $\ftwon$, i.e., the unit circle of $\ftwon$ by 
\[{\mu}_{2^m+1}=\{x\in \ftwon| x^{2^m+1}=1\}.\]

\begin{conj}\label{conj1} \emph{(\cite{DZ2018})}
Let $\gcd(m,5)=1$ and $(s,t)=(\frac{4}{11},\frac{10}{11})$, then $f(x)$ defined by \eqref{e1} is a permutation polynomial over $\mathbb{F}_{2^{2m}}$.
\end{conj}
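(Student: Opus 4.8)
\medskip
\noindent\textbf{Proof strategy.}

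The plan is to run the standard multiplicative reduction of Lemma~\ref{lem1} and then to analyze a single rational map of degree $11$ on the unit circle. Write $s,t\in\{0,1,\dots,2^m\}$ for the integers with $11s\equiv 4$ and $11t\equiv 10\pmod{2^m+1}$; these are well defined because $\gcd(11,2^m+1)=1$, which follows from $\gcd(m,5)=1$ since the multiplicative order of $2$ modulo $11$ is $10$ (so $11\mid 2^m+1$ would force $10\mid 2m$, i.e., $5\mid m$). With this notation $f(x)=x\cdot h(x^{2^m-1})$ where $h(x)=1+x^{s}+x^{t}$, and applying Lemma~\ref{lem1} with $q=2^{2m}$, $r=1$, inner exponent $2^m-1$ and hence $d=2^m+1$, condition~(1) is automatic, so $f$ permutes $\mathbb{F}_{2^{2m}}$ if and only if $g(x):=x\,h(x)^{2^m-1}$ permutes $\mu_{2^m+1}$.

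For $x\in\mu_{2^m+1}$ we have $x^{2^m}=x^{-1}$, hence $h(x)^{2^m}=1+x^{-s}+x^{-t}$ and $g(x)=x(1+x^{-s}+x^{-t})/(1+x^{s}+x^{t})$. Substituting $x\mapsto x^{11}$, a bijection of $\mu_{2^m+1}$, and using $11s\equiv 4$, $11t\equiv 10\pmod{2^m+1}$, this becomes
\[
\phi(y)=\frac{y\,(y^{10}+y^{6}+1)}{y^{10}+y^{4}+1}=y\left(\frac{y^{5}+y^{3}+1}{y^{5}+y^{2}+1}\right)^{2},
\]
so $f$ is a permutation polynomial if and only if $\phi$ permutes $\mu_{2^m+1}$. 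The numerator and denominator of $\phi$ are coprime of degrees $11$ and $10$, so $\phi$ has degree $11$; moreover $y^{5}+y^{2}+1$ and $y^{5}+y^{3}+1$ are irreducible over $\mathbb{F}_2$ and are each other's reciprocals, and since $\gcd(m,5)=1$ forces $\mathbb{F}_{2^{5}}\cap\mathbb{F}_{2^{2m}}=\mathbb{F}_{2}$, neither has a root in $\mathbb{F}_{2^{2m}}$. Hence $\phi$ is defined on all of $\mu_{2^m+1}$, and from $\phi(1/y)=1/\phi(y)$ it maps $\mu_{2^m+1}$ into itself.

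Since $\mu_{2^m+1}$ is finite it remains to prove $\phi$ is injective there. If $y_1,y_2\in\mu_{2^m+1}$ satisfy $\phi(y_1)=\phi(y_2)=c$, then $y_1,y_2$ are roots of the degree-$11$ polynomial
\[
R_{c}(y)=y\,(y^{10}+y^{6}+1)+c\,(y^{10}+y^{4}+1)\in\mathbb{F}_{2^{2m}}[y].
\]
If $y_1\ne y_2$, then $R_{c}$ has a quadratic factor $y^{2}+\beta y+\gamma$ over $\mathbb{F}_{2^{2m}}$ whose two roots both lie in $\mu_{2^m+1}$, and because $\overline{y_i}=y_i^{-1}$ this forces $\gamma^{2^m+1}=1$ and $\beta^{2^m}\gamma=\beta$. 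Thus the whole problem reduces to enumerating the quadratic factors of $R_{c}$ that can occur as $c$ runs over $\mu_{2^m+1}$ and checking that none of them can satisfy $\gamma^{2^m+1}=1$ and $\beta^{2^m}\gamma=\beta$ with its two roots distinct. Concretely, one may instead solve $\phi(y_1)=\phi(y_2)$ directly with $y_2=wy_1$: clearing denominators yields a polynomial identity of degree $11$ in $w$ with parameter $y_1\in\mu_{2^m+1}$, having $w=1$ as a simple root, and one must show it has no further root in $\mu_{2^m+1}$; its quadratic factors over $\mathbb{F}_{2^{2m}}$ are exactly the objects to inspect (an irreducible one has no root in $\mathbb{F}_{2^{2m}}\supseteq\mu_{2^m+1}$; a split one must be ruled out via the norm map $N_{\mathbb{F}_{2^{2m}}/\mathbb{F}_{2^m}}$). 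The hypothesis $\gcd(m,5)=1$ is what makes this possible: it ensures $\mathbb{F}_{2^{2m}}$ contains no nontrivial $31$st root of unity (the order of $2$ modulo $31$ being $5$) and no root of the two quintics above, and it fixes the $\mathbb{F}_{2}$-traces that decide whether each candidate quadratic factor occurs at all.

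The substantive work is this last step: performing the factorization of $R_{c}$ (equivalently, resolving $\phi(y_1)=\phi(y_2)$ into the associated degree-$11$ equation in $w$) and then eliminating each quadratic factor by showing that $\gamma^{2^m+1}=1$ together with $\beta^{2^m}\gamma=\beta$ can hold only in the degenerate case $y_1=y_2$. That is where all the case analysis sits and where $\gcd(m,5)=1$ is genuinely used; everything before it is formal. Once it is done, $\phi$ is injective on $\mu_{2^m+1}$, hence a permutation of it, so $f$ is a permutation polynomial of $\mathbb{F}_{2^{2m}}$, which proves Conjecture~\ref{conj1}.
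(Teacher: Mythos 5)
Your reduction is correct and coincides with the paper's: you invoke Lemma~\ref{lem1}, pass to the fractional map on $\mu_{2^m+1}$, and observe that non-injectivity would force the degree-$11$ polynomial $R_c(y)$ (the paper's $F(x)$ in \eqref{e4}) to have a quadratic factor $y^2+\beta y+\gamma$ whose two roots lie on the unit circle, which imposes $\beta^{2^m}\gamma=\beta$. Up to that point everything you write matches Section~\ref{sec-2} and the discussion preceding Lemma~\ref{lem3}. But from there on you only \emph{describe} what must be done ("the substantive work is this last step\dots that is where all the case analysis sits") without doing any of it. That last step is the entire mathematical content of the paper, and it is not routine: one must (i) extract, by comparing coefficients in the factorization $F(x)=(x^2+ax+b)(x^9+\cdots)$, the three possible algebraic relations on $(a,b)$ of Lemma~\ref{lem2}; (ii) use $a^{2^m}b=a$ to kill the first two cases outright (each collapses to a contradiction such as $b^2+b+1=0$ together with $b^3=0$, or $b^5=1$ together with $b(b^2+b+1)=0$); and (iii) for the remaining degree-$10$ relation, rewrite it in the symmetric functions $u=a^{-1}+a^{-2^m}$, $v=a^{-1}\cdot a^{-2^m}$, which lie in $\mathbb{F}_{2^m}$, obtaining equation \eqref{e11}, i.e.\ $G(v,u^2)=0$ for $G(x,y)=x^5+(y^2+y)x^3+x^2+(y^3+y^2)x+y^5+y^2+1$.

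The decisive idea you are missing is how to show this last equation has no $\mathbb{F}_{2^m}$-rational solution: the paper proves (Lemma~\ref{lem4}) that $G(x,y)=\prod_{\theta\in H}(x+\theta^{-1}y+\theta)$ where $H$ is the set of roots of the irreducible quintic $x^5+x^2+1$, and then (Lemma~\ref{lem5}) that a linear relation $x+\theta^{-1}y+\theta=0$ with $x,y\in\mathbb{F}_{2^m}$ forces $\theta\in\mathbb{F}_{2^{2m}}\cap\mathbb{F}_{2^{5m}}=\mathbb{F}_{2^m}$, contradicting $\gcd(m,5)=1$. This is where the hypothesis is genuinely used — not via $31$st roots of unity in $\mathbb{F}_{2^{2m}}$ or traces, as you speculate, but via the field-intersection argument after the passage to the subfield $\mathbb{F}_{2^m}$. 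Your proposal is therefore a correct strategy outline with a genuine gap where the proof should be.
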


Note that $\gcd(11,2^m+1)=1$ if $\gcd(m,5)=1$. Then, according to Lemma \ref{lem1}, to prove Conjecture \ref{conj1}, it suffices to show $x(1+x^{\frac{4}{11}}+x^{\frac{10}{11}})^{2^m-1}$, or equivalently 
\begin{equation}\label{e2}
x^{11}(1+x^4+x^{10})^{2^m-1}
 \end{equation}
permutes the unit circle of $\mathbb{F}_{2^{2m}}$. Observe that $x^5+x^2+1$ is irreducible of $\mathbb{F}_{2}[x]$ which implies that $x^5+x^2+1=0$ has solutions in $\mathbb{F}_{2^m}$ if and only if $m\equiv0\pmod 5$. Thus, by $\gcd(m,5)=1$, one gets $x^{10}+x^{4}+1=(x^{5}+x^{2}+1)^2\neq0$ for any $x\in \mu_{2^m+1}\subset \mathbb{F}_{2^{2m}}$. It is therefore (\ref{e2}) can be written as
\begin{equation}\label{e3}
x^{11}\cdot\frac{1+x^{-4}+x^{-10}}{1+x^4+x^{10}}=\frac{x^{11}+x^{7}+x}{x^{10}+x^4+1}.
 \end{equation}\\
Then, to prove Conjecture \ref{conj1}, it suffices to show that (\ref{e3}) permutes the unit circle of $\mathbb{F}_{2^{2m}}$ if $\gcd(m,5)=1$, i.e.,
\begin{equation*}
\frac{x^{11}+x^{7}+x}{x^{10}+x^4+1}=t
 \end{equation*}
has a unique solution in $\mu_{2^m+1}$ for any $t\in \mu_{2^m+1}$ if $\gcd(m,5)=1$, which is equivalent to proving that the equation
\begin{equation}\label{e4}
x^{11}+tx^{10}+x^{7}+tx^{4}+x+t=0
 \end{equation}
has at most one solution in $\mu_{2^m+1}$ for any $t\in \mu_{2^m+1}$ if $\gcd(m,5)=1$.

\section{Proof of Conjecture \ref{conj1} }\label{sec-3}

This section presents the proof of Conjecture \ref{conj1}.

\begin{lem}\label{lem2} 
Let $F(x)=x^{11}+tx^{10}+x^{7}+tx^{4}+x+t$, where $t\in \mu_{2^m+1}$. If $x^2+ax+b$, where $ab\neq0$, is a factor of $F(x)$, then $a$,$b$ must satisfy one of the following conditions:

\rm{(1)} $ab^3+b^2+b+a^2=0$ and $b^6+(a^4+1)b^4+b^3+a^2=0$;

\rm{(2)} $b^2+b^3+a^2b^2+a=0$ and $a^2b^6+b^5+b^4+b^2+a^4=0$;

\rm{(3)} $a^{10}+(b^4+b^2+1)a^6+(b^5+b)a^4+(b^7+b)a^2+b^{10}+b^{8}+b^{7}+
b^{5}+b^{3}+b^{2}+1=0$.
\end{lem}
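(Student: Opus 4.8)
The plan is to reduce the divisibility $x^{2}+ax+b\mid F(x)$ to the vanishing of a Euclidean remainder and then eliminate the parameter $t$. Write $F(x)=A(x)+t\,B(x)$ with $A(x)=x^{11}+x^{7}+x$ and $B(x)=x^{10}+x^{4}+1$. Reducing modulo $x^{2}+ax+b$, i.e.\ using $x^{2}\equiv ax+b$, one has $x^{k}\equiv p_{k}x+q_{k}$ where $p_{0}=0$, $p_{1}=1$, $p_{k+1}=a\,p_{k}+b\,p_{k-1}$ and $q_{k}=b\,p_{k-1}$; running the recurrence up to $k=11$ gives $A(x)\equiv a_{1}x+a_{0}$ and $B(x)\equiv b_{1}x+b_{0}\pmod{x^{2}+ax+b}$ for explicit $a_{0},a_{1},b_{0},b_{1}\in\mathbb{F}_{2}[a,b]$. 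Grouping each of these four polynomials by the parity of the exponent of $b$ and using that squaring is additive in characteristic two, they take the compact shape $a_{1}=e^{2}+b\,u^{2}$, $a_{0}=a\,b\,u^{2}$, $b_{1}=a\,d^{2}$, $b_{0}=g^{2}+b\,d^{2}$, where $u=a^{4}+a^{2}b+a^{2}+b^{2}+b$, $d=a^{4}+a^{2}b+b^{2}+a$, $e=a^{5}+a^{3}+ab^{2}+1$, $g=a^{3}b+b+1$. Hence $x^{2}+ax+b\mid F(x)$ if and only if $e^{2}+bu^{2}+t\,a\,d^{2}=0$ and $a\,b\,u^{2}+t\,(g^{2}+bd^{2})=0$.

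The remaining step is to extract conditions on $a,b$ alone from these two equations, which I would do by case analysis on whether the coefficient of $t$ vanishes. In the \emph{generic case} both $ad^{2}\neq 0$ and $g^{2}+bd^{2}\neq 0$: solving the first equation for $t$ and substituting into the second (after clearing denominators) yields $a^{2}bu^{2}d^{2}+(e^{2}+bu^{2})(g^{2}+bd^{2})=0$, i.e.\ $a_{1}b_{0}+a_{0}b_{1}=0$. Expanding the left-hand side and again splitting it according to the parity of the $b$-exponent turns it into $R^{2}+bS^{2}$ with $R=a^{5}+a^{3}b^{2}+a^{3}b+a^{3}+b^{5}+b^{4}+b+1$ and $S=a^{2}b^{2}+a^{2}+ab^{3}+a+b^{3}+b^{2}+b$; expanding $R^{2}+bS^{2}$ one recovers exactly the polynomial of condition (3). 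In the \emph{first degenerate case} $d=0$ (so $b_{1}=0$): since $a\neq 0$, the first equation forces $a_{1}=e^{2}+bu^{2}=0$, and substituting the relation $d=0$ (equivalently $b^{2}=a^{4}+a^{2}b+a$, whence also $u=a^{2}+a+b$) into $a_{1}=0$ and simplifying reduces the pair $\{d=0,\ a_{1}=0\}$ to condition (1). In the \emph{second degenerate case} $g^{2}+bd^{2}=0$ (so $b_{0}=0$): the second equation forces $a_{0}=abu^{2}=0$, hence $u=0$ since $ab\neq 0$; substituting $u=0$ (whence $d=a^{2}+a+b$) into $b_{0}=0$ and simplifying reduces the pair $\{u=0,\ b_{0}=0\}$ to condition (2). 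Since one of the three cases always occurs whenever $x^{2}+ax+b$ is a factor of $F(x)$, the lemma follows.

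The main obstacle is purely computational: one must run the eleven steps of the recurrence without error, recognize the square-sum factorizations of $a_{0},a_{1},b_{0},b_{1}$, and—most delicately—verify that in each of the two degenerate cases the forced pair of equations simplifies to \emph{precisely} the stated form of condition (1), respectively (2), and that $a_{1}b_{0}+a_{0}b_{1}$ expands to exactly the polynomial in condition (3). All of these are finite polynomial identities over $\mathbb{F}_{2}$, and the splitting by parity of the $b$-exponent is the device that keeps the bookkeeping manageable.
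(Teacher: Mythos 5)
Your reduction machinery is sound, and your explicit polynomials check out: with $u,d,e,g$ as you define them, the remainder of $A(x)+tB(x)$ modulo $x^{2}+ax+b$ is indeed $(e^{2}+bu^{2}+t\,ad^{2})x+(abu^{2}+t(g^{2}+bd^{2}))$, and the elimination polynomial satisfies $a_{1}b_{0}+a_{0}b_{1}=(eg+bud)^{2}+b(ed+ug+aud)^{2}=R^{2}+bS^{2}$ with exactly your $R,S$, which expands to precisely the polynomial of condition (3). So your generic case is correct, and your route (Euclidean remainder, two equations linear in $t$, then eliminate $t$) is the same elimination idea as the paper, which obtains its two linear-in-$t$ equations instead by computing the coefficients $c_{4},c_{5}$ of the degree-$9$ cofactor from the top and from the bottom of the factorization.

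The gap is in the two degenerate cases: they do not reduce to conditions (1) and (2). Your case split is governed by the vanishing of $ad^{2}$ and $g^{2}+bd^{2}$, whereas the paper's conditions (1) and (2) arise from the vanishing of different $t$-coefficients, namely $a(ab^{3}+b^{2}+b+a^{2})^{2}$ and $a^{2}b^{6}+b^{7}+a^{4}b^{5}+b^{4}+b^{2}+a^{2}b+a^{4}$, and the two splits genuinely differ. Concretely, reducing modulo $d=b^{2}+a^{2}b+(a^{4}+a)$ (monic quadratic in $b$) one finds $a_{1}\equiv \alpha b+\beta$ with $\alpha=a^{8}+a^{4}+a^{2}+a$ and $\beta=a^{10}+a^{7}+a^{4}+a^{3}+1$, while the first equation of condition (1) reduces to $b+a^{7}+a^{2}+a$. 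Since $\gcd(\alpha,\beta)=1$, the system $\{d=0,\ a_{1}=0\}$ consists of the $20$ points $b=\beta/\alpha$ with $a$ a root of the squarefree degree-$20$ polynomial $\beta^{2}+a^{2}\alpha\beta+(a^{4}+a)\alpha^{2}$, whereas condition (1) would additionally force $\beta+\alpha(a^{7}+a^{2}+a)=0$, a nonzero polynomial of degree $15$; hence at least five of those points violate condition (1), and the ``delicate verification'' you defer would fail (the second degenerate case has the same defect). Fortunately your own computation contains the repair: divisibility gives $a_{1}=tb_{1}$ and $a_{0}=tb_{0}$, hence $a_{1}b_{0}+a_{0}b_{1}=0$ with no case distinction at all, and by your identity this is exactly condition (3). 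So your remainder approach does prove the lemma --- indeed the cleaner statement that condition (3) always holds --- but not via the asserted reductions to (1) and (2).
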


\begin{proof}
  Assume that $F(x)$ can be factorized as
\begin{equation*}
F(x)=(x^2+ax+b)(x^9+\sum\limits_{i=1}^{9}c_ix^{9-i}).
\end{equation*}
Expanding the right hand side of $F(x)$ and comparing the coefficients of $x^{11-i}$ where $i=1,2,3,4,5$ gives
\begin{eqnarray*}
c_{1}&=&a+t,\\
c_{2}&=&b+a^2+at,\\
c_{3}&=&a^3+a^2t+bt,\\
c_{4}&=&1+b^2+a^2b+a^4+a^3t,\\
c_{5}&=&a+ab^2+a^5+a^2bt+b^2t+a^4t,
\end{eqnarray*}
and comparing the coefficients of $x^{i}$ for $i=0,1,2,3,4,5$ gives
\begin{eqnarray*}
c_{9}&=&\frac{t}{b},\\
c_{8}&=&\frac{b+at}{b^2},\\
c_{7}&=&\frac{bt+ab+a^2t}{b^3},\\
c_{6}&=&\frac{b^2+a^2b+a^3t}{b^4},\\
c_{5}&=&\frac{b^4t+b^2t+a^2bt+a^4t+a^3b}{b^5},\\
c_{4}&=&\frac{b^3+a^2b^2+ab^4t+ab^2t+a^5t+a^4b}{b^6}.
\end{eqnarray*}
Then, according to the values of $c_4$ and $c_5$, one gets
\[
\left\{\begin{aligned}
&(1+b^2+a^2b+a^4+a^3t)b^6=b^3+a^2b^2+a^4b+(ab^4+ab^2+a^5)t, \\
&(a+ab^2+a^5+a^2bt+b^2t+a^4t)b^5=a^3b+(b^4+b^2+a^2b+a^4)t,  \\
\end{aligned}\right.
\]
i.e.,
\begin{numcases}{}
   (ab^4+ab^2+a^5+a^3b^6)t=b^6+b^8+a^2b^7+a^4b^6+b^3+a^2b^2+a^4b, \label{e5}\\
   (a^2b^6+b^7+a^4b^5+b^4+b^2+a^2b+a^4)t=ab^5+ab^7+a^5b^5+a^3b.\label{e6}
\end{numcases}

In the following  we shall consider three cases to prove Lemma \ref{lem2}.

\textbf{Case 1.} If $ab^4+ab^2+a^5+a^3b^6=0$, i.e., $(b^2+b+a^2+ab^3)^2=0$ since $a\ne0$. 
Then by (\ref{e5}), one obtains that $b^6+b^8+a^2b^7+a^4b^6+b^3+a^2b^2+a^4b=0$, i.e., $b^5+b^7+a^2b^6+a^4b^5+b^2+a^2b+a^4=0$. Replacing $a^2b^6+a^4$ by $b^4+b^2$ gives 
\[b^5+b^7+a^4b^5+b^2+a^2b+b^4+b^2=0.\] 
Thus, in this case we have
\begin{numcases}{}
ab^3+b^2+b+a^2=0,  \label{e7}\\
b^6+(a^4+1)b^4+b^3+a^2=0.\label{e8}
\end{numcases}

\textbf{Case 2.} If $a^2b^6+b^7+a^4b^5+b^4+b^2+a^2b+a^4=0$, by (\ref{e6}) one gets $ab^5+ab^7+a^5b^5+a^3b=0$, which implies  $ab(b^2+b^3+a^2b^2+a)^2=0$, and one then has
\begin{numcases}{}
b^2+b^3+a^2b^2+a=0, \label{e9}\\
a^2b^6+b^5+b^4+b^2+a^4=0\label{e10}
\end{numcases}
due to $b^7+a^4b^5+a^2b=b(b^3+a^2b^2+a)^2=b^5$.

\textbf{Case 3.} If $ab^4+ab^2+a^5+a^3b^6\ne0$ and $a^2b^6+b^7+a^4b^5+b^4+b^2+a^2b+a^4\ne0$. Then, by (\ref{e5}) and (\ref{e6}), one gets $$\frac{b^6+b^8+a^2b^7+a^4b^6+b^3+a^2b^2+a^4b}{ab^4+ab^2+a^5+a^3b^6}
=\frac{ab^5+ab^7+a^5b^5+a^3b}{a^2b^6+b^7+a^4b^5+b^4+b^2+a^2b+a^4},$$
which can be written as $$a^{10}+(b^4+b^2+1)a^6+(b^5+b)a^4+(b^7+b)a^2+b^{10}+b^8+b^7+b^5+b^3+b^2+1=0$$
by a detailed calculation. This completes the proof.
\end{proof}

To prove Conjecture 1, we need to show that $F(x)$ in Lemma \ref{lem2} cannot have two solutions in $\mu_{2^m+1}$ for any $t\in\mu_{2^m+1}$, i.e., $F(x)$ cannot have a quadratic factor $x^2+ax+b=0$ satisfying $x^2+ax+b=0$ have two solutions in $\mu_{2^m+1}$. Observe that if $x_1, x_2 \in \mu_{2^m+1}$ are two solutions to $x^2+ax+b=0$, then
\[x_1+x_2=a, x_1x_2=b.\]
Moreover, one has
\[x_1^{2^m}+x_2^{2^m}=a^{2^m}=\frac{1}{x_1}+\frac{1}{x_2}=\frac{x_1+x_2}{x_1x_2}=\frac{a}{b},\]
i.e., $a^{2^m}b=a$. This implies that if $x^2+ax+b=0$ is a factor of $F(x)$ satisfying $x^2+ax+b=0$ has two solutions in $\mu_{2^m+1}$, then it must have $a^{2^m}b=a$. Actually this fact has been found in \cite{Tu-zlh} and the number of solutions $x$ in $\mu_{2^m+1}$ to $x^2+ax+b=0$ has also been characterized there. We provide the proof of the relation $a^{2^m}b=a$ here to make the paper self-contained.

Due to this fact, we further consider the conditions in Lemma \rm{\ref{lem2}}.

\begin{lem}\label{lem3}
Let $F(x)=x^{11}+tx^{10}+x^7+tx^4+x+t$, where $t\in\mu_{2^m+1}$. If $x^2+ax+b$, where $ab\ne0$ and $a^{2^m}b=a$, is a quadratic factor of $F(x)$, then $a,b$ must satisfy
\begin{equation}\label{e11}
v^5+(u^4+u^2)v^3+v^2+(u^6+u^4)v+u^{10}+u^4+1=0,
\end{equation}
where $u=a^{-1}+a^{-2^m}$ and $v=a^{-1}\cdot a^{-2^m}$.
\end{lem}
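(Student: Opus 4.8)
The plan is to run the classification in Lemma~\ref{lem2} against the extra constraint $a^{2^m}b=a$, and to make that constraint's effect explicit through two rational substitutions together with one polynomial identity. First, the bookkeeping: from $a^{2^m}b=a$ with $ab\neq0$, raising to the $2^m$-th power and using $a^{2^{2m}}=a$ gives $ab^{2^m}=a^{2^m}$, hence $b^{2^m}=b^{-1}$; thus $b\in\mu_{2^m+1}$ and $a^{-2^m}=a^{-1}b$. Substituting the latter into the definitions yields the rational expressions $u=a^{-1}+a^{-2^m}=a^{-1}(1+b)$ and $v=a^{-1}\cdot a^{-2^m}=a^{-2}b$, i.e.\ $ua=1+b$ and $va^2=b$. (One also checks $u^{2^m}=u$ and $v^{2^m}=v$, so $u,v\in\ftwom$, a fact that will be needed later.)

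The computational heart is the polynomial identity: with $u=(1+b)/a$ and $v=b/a^2$,
\[
a^{10}\bigl(v^5+(u^4+u^2)v^3+v^2+(u^6+u^4)v+u^{10}+u^4+1\bigr)=a^{10}+(b^4+b^2+1)a^6+(b^5+b)a^4+(b^7+b)a^2+b^{10}+b^8+b^7+b^5+b^3+b^2+1,
\]
whose right-hand side is exactly the polynomial occurring in condition~(3) of Lemma~\ref{lem2}. I would verify this by substituting for $u$ and $v$, clearing the common denominator $a^{10}$, and collecting powers of $a$ over $\ftwo$; every coefficient matches. Consequently, whenever Lemma~\ref{lem2} puts us in condition~(3), that right-hand side equals $0$, so the bracketed factor on the left equals $0$ (since $a\neq0$), which is precisely \eqref{e11}.

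It remains to eliminate conditions~(1) and~(2). The $2^m$-th power map, using $a^{2^m}=ab^{-1}$ and $b^{2^m}=b^{-1}$, carries \eqref{e7} to \eqref{e9} and \eqref{e8} to \eqref{e10} (and conversely), so under condition~(1) --- and symmetrically under condition~(2) --- all of \eqref{e7}--\eqref{e10} hold simultaneously. Eliminating between \eqref{e7} and \eqref{e9} gives $a(b^5+1)=b^2(b+1)^2$, and feeding this back into \eqref{e7} forces $b=1$ or $b^2+b+1=0$; in the first case $a=1$, in the second $a=b+1$. In both cases the left-hand side of \eqref{e8} evaluates to $1$, a contradiction. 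Hence conditions~(1) and~(2) are vacuous under $a^{2^m}b=a$, so Lemma~\ref{lem2} always lands us in condition~(3), and the identity of the previous paragraph finishes the argument.

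The main obstacle is the elimination in the last paragraph: one has to track the degenerate situation where $b^5+1$ vanishes (i.e.\ $b^5=1$, which funnels into $b=1$) so that no spurious pair $(a,b)$ survives to the final consistency check against \eqref{e8}. The identity in the middle paragraph, while tedious to expand by hand, is entirely mechanical and carries no real difficulty.
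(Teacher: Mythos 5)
Your proof is correct, and for the main case it is essentially the paper's argument: the paper likewise sets $u=(b+1)/a$, $v=b/a^2$ and verifies the same identity, term by term after dividing by $a^{10}$, to convert condition (3) of Lemma~\ref{lem2} into \eqref{e11}. Where you genuinely diverge is in disposing of conditions (1) and (2): the paper treats them as two separate cases, in each one taking the $2^m$-th power of one equation and running an independent elimination (reaching $ab(b+1)=(b+1)^4$ in case (1) and $a=1/(b(b^2+b+1))$ together with $b^5=1$ in case (2)), whereas you observe that under $a^{2^m}b=a$, $b^{2^m}=b^{-1}$ the Frobenius $x\mapsto x^{2^m}$ carries \eqref{e7} to \eqref{e9} and \eqref{e8} to \eqref{e10}, so either condition forces all four equations and a single elimination ($a(b^5+1)=b^2(b+1)^2$, then $b=1,a=1$ or $b^2+b+1=0,a=b+1$, both making the left side of \eqref{e8} equal to $1$) kills both cases at once. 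I checked your elimination and the degenerate branch $b^5=1$: they are sound, and your unified treatment is slightly cleaner --- it also sidesteps a small slip in the paper's case (2), where $a^{2^m}$ is replaced by $a^{2\cdot 2^m}$ in passing from \eqref{e14} to the displayed identity, though the paper still reaches a valid contradiction by its own route.
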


\begin{proof}
According to Lemma \ref{lem2}, we can discuss the three cases in Lemma \ref{lem2} as follows:
\begin{enumerate}
  \item [(1)]
   Taking $2^m$-th power on both sides of (\ref{e7}) gives
   \begin{equation}\label{e12}
     a^{2^m}b^{3\cdot2^m}+b^{2\cdot2^m}+b^{2^m}+a^{2\cdot2^m}=0.
   \end{equation}
  On the other hand, by $a^{2^m}b=a$ and $ab^{2^m}=a^{2^m}$, one gets $a^{2^m}=a/b$, $b^{2^m}=a^{2^m}/a=1/b$ and then
\begin{eqnarray*}
a^{2^m}b^{3\cdot2^m}=\frac{a}{b}\cdot\frac{1}{b^3}=\frac{a}{b^4},b^{2\cdot2^m}=\frac{1}{b^2},b^{2^m}=\frac{1}{b},a^{2\cdot2^m}=\frac{a^2}{b^2}.
\end{eqnarray*}

Then, (\ref{e12}) can be written as
   \begin{equation}\label{e13}
   \frac{a}{b^4}+\frac{1}{b^2}+\frac{1}{b}+\frac{a^2}{b^2}=\frac{1}{b^4}(a+b^2+b^3+a^2b^2)=0.
   \end{equation}
 Combining (\ref{e7}) and (\ref{e13}) gives
 \[(ab^3+a^2)b=a+a^2b^2,\]
 i.e., \[ab(b+1)=(b+1)^4.\]

 If $b=1$, then (\ref{e7}) leads to $a=1$ since $a\neq0$, a contradiction to (\ref{e8}). Thus, one gets $b\neq1$ and then $a=(b+1)^3b^{-1}$. Substituting it into (\ref{e7}), one obtains
\[b^2(b+1)^3+b^2+b+\frac{(b+1)^6}{b^2}=0,\]
which can be reduced to $b^6+b^5+b^3+b+1=(b^2+b+1)^3=0$, i.e., $b^2+b+1=0$. Then, $b^3=1$ and $a=(b+1)^3b^{-1}=b^6b^{-1}=b^{-1}=b^2$.
Thus, (\ref{e8}) is reduced to
\[1+(b^8+1)b^4+1+b^4=0,\]
i.e., $0=(b^2+1)b+b=b^3$, a contradiction to $b^2+b+1=0$. Then, the first case in Lemma \ref{lem2} cannot happen if $a^{2^m}b=a$.

  \item [(2)]
  Taking $2^m$-th power on both sides of (\ref{e9}) gives
    \begin{equation}\label{e14}
     b^{2\cdot2^m}+b^{3\cdot2^m}+a^{2\cdot2^m}b^{2\cdot2^m}+a^{2^m}=0.
   \end{equation}
   Again by $a^{2^m}=a/b$ and $b^{2^m}=1/b$, one has
\begin{eqnarray*}
b^{2\cdot2^m}=\frac{1}{b^2},b^{3\cdot2^m}=\frac{1}{b^3},a^{2\cdot2^m}b^{2\cdot2^m}=\frac{a^2}{b^4},a^{2\cdot2^m}=\frac{a^2}{b^2}. 
\end{eqnarray*}
Then, (\ref{e14}) can be written as
\[\frac{1}{b^2}+\frac{1}{b^3}+\frac{a^2}{b^4}+\frac{a^2}{b^2}=\frac{1}{b^4}(b^2+b+a^2+a^2b^2)=0.\]
This together with (\ref{e9}) implies that
\begin{equation}\label{e-a1}
(a^2+a^2b^2)b=a^2b^2+a,\;\; {\rm i.e.,}\;\; a=\frac{1}{b(b^2+b+1)}
\end{equation} 
since $ab\neq0$. Substituting it into (\ref{e9}) gives
\[b^2+b^3+\frac{1}{(b^2+b+1)^2}+\frac{1}{b(b^2+b+1)}=0,\]
which is equivalent to
 \begin{equation}\label{e-b}
 (b+1)(b^7+b^5+b^3+b+1)=0. 
 \end{equation} 
  
  Since $a^{2^m}=a/b$ and $b^{2^m}=1/b$, then by \eqref{e-a1} one has
  \[\frac{a}{b}=a^{2^m}=\frac{1}{b^{2^m}(b^{2\cdot2^m}+b^{2^m}+1)}=\frac{1}{b^{-1}(b^{-2}+b^{-1}+1)}=\frac{b^3}{b^2+b+1},\]
 which together with \eqref{e-a1} implies that $b^5=1$.
  If $b=1$, then $a=\frac{1}{b(b^2+b+1)}=1$, a contradiction to (\ref{e10}). If $b\ne 1$, then by $b^5=1$ and \eqref{e-b} one gets $b^7+b^5+b^3+b+1=b^2+b^3+b=b(b^2+b+1)=0$, a contradiction to \eqref{e-a1}. 
  Thus, the second case in Lemma \ref{lem2} cannot happen either if $a^{2^m}b=a$.

  \item [(3)] By $a^{2^m}b=a$, one gets $a^{2^m}=a/b$, $a^{-2^m}=b/a$, which implies 
  \[a^{-1}+a^{-2^m}=\frac{b+1}{a}, \;\;a^{-1}\cdot a^{-2^m}=\frac{b}{a^2}.\]
Let $u=a^{-1}+a^{-2^m}$ and $v=a^{-1}\cdot a^{-2^m}$, then we have
\begin{itemize}
   \item [1)] $\frac{a^6(b^4+b^2+1)}{a^{10}}=\frac{b^4+b^2+1}{a^4}=(\frac{b+1}{a})^4+(\frac{b}{a^2})^2=u^4+v^2$;
   \item [2)] $\frac{a^4(b^5+b)}{a^{10}}=(\frac{b+1}{a})^4\cdot\frac{b}{a^2}=u^4v$;
   \item [3)] $\frac{a^2(b^7+b)}{a^{10}}=\frac{b(b^6+1)}{a^8}=\frac{b[(b+1)^6+(b^4+b^2)]}{a^2\cdot a^6}=vu^6+v^3u^2$;
   \item [4)] $\frac{b^{10}+b^8+b^7+b^5+b^3+b^2+1}{a^{10}}=\frac{(b+1)^{10}+b^7+b^5+b^3}{a^{10}}=u^{10}+v^3u^4+v^5$.
\end{itemize}
\end{enumerate}
Thus, by Lemma \ref{lem2} (3), one obtains 
$$v^5+(u^4+u^2)v^3+v^2+(u^6+u^4)v+u^{10}+u^4+1=0.$$
 This completes the proof.
\end{proof}

Notice that if (\ref{e4}) has two or more solutions in $\mu_{2^m+1}$ for some $t$ in $\mu_{2^m+1}$, then $F(x)$ must have a quadratic factor, say $x^2+ax+b$. If so, by Lemma \ref{lem3}, the coefficients $a$, $b$ must satisfy the relation (\ref{e11}). Then, to prove Conjecture \ref{conj1}, we only need to show that (\ref{e11}) has no solution in $\mathbb{F}_{2^m}$ if $\gcd(m,5)=1$,

To this end, define
\[G(x,y)=x^5+(y^2+y)x^3+x^2+(y^3+y^2)x+y^5+y^2+1.\]

Notice that $x^5+x^2+1$ is irreducible over $\mathbb F_2[x]$ and $\gcd(m,5)=1$, then all the roots of $x^5+x^2+1=0$ are conjugate over $\ftwo$ and lie in $\mathbb F_{2^{5m}}$ \cite[Thm. 2.14]{lidl1997}.
Define $$H=\{x\in \mathbb F_{2^{5m}}|x^5+x^2+1=0\}.$$

\begin{lem}\label{lem4}
The polynomial $G(x,y)$ can be factorized over $\mathbb F_{2^{5m}}[x,y]$ as follows:
\[G(x,y)=\prod_{\theta\in H}(x+\theta^{-1}y+\theta).\]
\end{lem}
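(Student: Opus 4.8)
The plan is to verify the claimed factorization directly by expanding the product on the right-hand side and matching coefficients, using the fact that the five roots $\theta\in H$ are precisely the roots of $x^5+x^2+1=0$ and hence their elementary symmetric functions are known. First I would fix the abbreviation $p(z)=z^5+z^2+1=\prod_{\theta\in H}(z+\theta)$, so that the elementary symmetric polynomials $e_1,\dots,e_5$ in the five roots satisfy $e_1=0$, $e_2=0$, $e_3=1$ (coefficient of $z^2$), $e_4=0$, $e_5=1$ (constant term); here I am using characteristic two so all signs disappear. Next I would expand
\[
\prod_{\theta\in H}\bigl(x+\theta^{-1}y+\theta\bigr).
\]
Each factor is linear in $x$, so the product is a degree-$5$ polynomial in $x$ whose coefficient of $x^{5-k}$ is the $k$-th elementary symmetric function of the quantities $\{\theta^{-1}y+\theta:\theta\in H\}$. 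Thus the whole identity reduces to computing, for $k=0,1,\dots,5$, the symmetric function $\sigma_k:=e_k\bigl(\theta_1^{-1}y+\theta_1,\dots,\theta_5^{-1}y+\theta_5\bigr)$ and checking that $\sigma_0=1$, $\sigma_1=0$, $\sigma_2=y^2+y$, $\sigma_3=1$, $\sigma_4=y^3+y^2$, $\sigma_5=y^5+y^2+1$.

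Each $\sigma_k$ is a polynomial in $y$ whose coefficients are themselves symmetric functions of the $\theta$'s, so they can be rewritten in terms of $e_1,\dots,e_5$ and of the symmetric functions of the inverses $\theta^{-1}$. Since $\prod(z+\theta^{-1})=\prod\theta^{-1}\cdot\prod(\theta^{-1}z+1)\cdot(\text{reversal of }p)$, and $p(z)$ is its own reversal up to the constant term in characteristic two (more precisely $z^5 p(1/z)=1+z^3+z^5$), I would record that $e_k(\theta^{-1}:\theta\in H)$ equals the coefficient sequence of $1+z^3+z^5$, namely $f_1=0$, $f_2=0$, $f_3=1$, $f_4=0$, $f_5=1$. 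With these two coefficient lists in hand, every $\sigma_k$ becomes an explicit polynomial in $y$ after expanding $e_k$ of a sum termwise via the multilinearity of elementary symmetric functions; the mixed terms that appear are controlled by Newton-type or by direct bookkeeping of which $\theta_i$ contribute a ``$\theta^{-1}y$'' part versus a ``$\theta$'' part. I expect the computation of $\sigma_2$ and $\sigma_4$ — the ones producing $y^2+y$ and $y^3+y^2$, which require combining power sums like $\sum\theta\cdot\theta^{-1}=5=1$ with the cross sums $\sum_{i\neq j}\theta_i\theta_j^{-1}$ — to be the only genuinely fiddly part; the extreme cases $\sigma_0$, $\sigma_1$, $\sigma_5$ are immediate ($\sigma_0=1$ trivially, $\sigma_1=\sum\theta^{-1}y+\sum\theta=f_1 y+e_1=0$, and $\sigma_5=\prod(\theta^{-1}y+\theta)=\prod\theta^{-1}\cdot\prod(y+\theta^2)=1\cdot p$-type product evaluated at the $\theta^2$, which one checks equals $y^5+y^2+1$ since $\{\theta^2\}=H$ as $H$ is closed under squaring, being a conjugacy class).

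The only real obstacle is organizing the symmetric-function bookkeeping for the middle coefficients without error; there is no structural difficulty, since once the two coefficient lists are fixed the identity is a finite mechanical check in $\mathbb{F}_2[y]$. An alternative, perhaps cleaner, route that I would mention is to avoid symmetric functions entirely: observe that for each fixed $\theta\in H$ the linear polynomial $x+\theta^{-1}y+\theta$ divides $G(x,y)$ if and only if $G(\theta+\theta^{-1}y,\,y)=0$ identically in $y$; substituting $x=\theta+\theta^{-1}y$ into $G$ and reducing modulo $\theta^5+\theta^2+1=0$ (using $\theta^5=\theta^2+1$, hence $\theta^{-1}=\theta^4+\theta$, etc.) collapses $G(\theta+\theta^{-1}y,y)$ to $0$ after a short computation, giving five distinct linear factors; since $G$ has $x$-degree $5$ and is monic in $x$, these five factors account for all of $G$, proving the identity. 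I would present whichever of the two computations is shorter in the final writeup, but both hinge on the same elementary facts about $H$: it has five elements, it is a single Frobenius orbit, it is closed under $x\mapsto x^2$, and its defining polynomial is $x^5+x^2+1$.
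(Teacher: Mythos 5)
Your proposal is correct in substance, and your ``alternative, perhaps cleaner, route'' at the end is exactly the paper's proof: the paper checks that $G$ vanishes identically on the line defined by each factor, writing the substitution as $y=\theta x+\theta^2$ (rather than $x=\theta+\theta^{-1}y$, which avoids ever introducing $\theta^{-1}$), expands term by term, and observes that every coefficient is a multiple of $\theta^5+\theta^2+1=0$; distinctness of the five $\theta$ plus monicity of $G$ in $x$ then forces the full factorization, just as you say. Your primary route via elementary symmetric functions is a legitimately different (and self-contained) computation, but it carries one concrete slip that would derail it if executed as written: the monic polynomial with roots $\theta^{-1}$ is $z^5+z^3+1$ (the reversal of $z^5+z^2+1$ divided by its constant term), so the symmetric functions of the inverses are $f_1=0$, $f_2=1$, $f_3=0$, $f_4=0$, $f_5=1$ — you have listed $f_2=0$, $f_3=1$, which are the coefficients of $p$ itself rather than of its reversal. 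The value $f_2=1$ is essential: together with $\sum_{i\neq j}\theta_i\theta_j^{-1}=\bigl(\sum\theta\bigr)\bigl(\sum\theta^{-1}\bigr)+5=1$ it is what produces $\sigma_2=y^2+y$; with your list you would get $\sigma_2=y$ and the match with $G$ would appear to fail. With that correction the symmetric-function bookkeeping does close up (e.g.\ $\sigma_3=e_3=1$ and $\sigma_5=\prod\theta^{-1}\cdot\prod(y+\theta^2)=y^5+y^2+1$ as you note), but the substitution argument is shorter and is the one the paper uses.
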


\begin{proof}
Since all the roots of $x^5+x^2+1=0$ are distinct and lie in $H$, thus to complete the proof, we only need to show that $G(x,y)=0$ if $y=\theta x+\theta^2$ for any $\theta\in H$. By a direct computation, if $y=\theta x+\theta^2$, then one gets
\begin{eqnarray*}
  (y^2+y)x^3  &=& \theta^2 x^5+\theta x^4+(\theta^4+\theta^2)x^3, \\
  (y^3+y^2)x &=& \theta^3x^4+(\theta^4+\theta^2)x^3+\theta^5x^2+(\theta^6+\theta^4)x, \\
  y^5+y^2+1 &=&  \theta^5x^5+\theta^6x^4+\theta^2x^2+\theta^9x+\theta^{10}+\theta^4+1,
\end{eqnarray*}
which  implies that 
\begin{eqnarray*}
  G(x,\theta x+\theta^2) &=& (\theta^5+\theta^2+1)x^5+(\theta^6+\theta^3+\theta)x^4+ (\theta^5+\theta^2+1)x^2\\
    &&+(\theta^9+\theta^6+\theta^4)x +(\theta^5+\theta^2+1)^2. 
\end{eqnarray*}
Note that $\theta\in H$. Then, $\theta^5+\theta^2+1=0$, $\theta^6+\theta^3+\theta=\theta(\theta^5+\theta^2+1)=0$ and $\theta^9+\theta^6+\theta^4=\theta^4(\theta^5+\theta^2+1)=0$, i.e., $ G(x,\theta x+\theta^2)=0$ for any $x$ and any $\theta\in H$.
This implies the proof.
\end{proof}

\begin{lem}\label{lem5}
$G(x,y)=0$ has no solution $(x,y)\in \mathbb F_{2^{m}}\times\mathbb F_{2^{m}}$ if $\gcd(m,5)=1$.
\end{lem}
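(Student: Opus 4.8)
The plan is to leverage the factorization from Lemma \ref{lem4}, namely $G(x,y)=\prod_{\theta\in H}(x+\theta^{-1}y+\theta)$, where $H=\{x\in\mathbb F_{2^{5m}}\mid x^5+x^2+1=0\}$ has exactly five elements, all conjugate over $\mathbb F_2$ and lying in $\mathbb F_{2^{5m}}$ (but, crucially, none of them in $\mathbb F_{2^m}$ since $\gcd(m,5)=1$). Suppose, for contradiction, that $(x,y)\in\mathbb F_{2^m}\times\mathbb F_{2^m}$ satisfies $G(x,y)=0$. Then one of the five linear factors must vanish, i.e.\ there exists $\theta\in H$ with $x+\theta^{-1}y+\theta=0$, equivalently $\theta^2+x\theta+y=0$ (using $\theta\ne0$). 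So $\theta$ is a root of the quadratic $z^2+xz+y\in\mathbb F_{2^m}[z]$, which means $\theta\in\mathbb F_{2^{2m}}$.

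The key step is then the degree/conjugacy obstruction: $\theta$ lies in $\mathbb F_{2^{5m}}$ and satisfies an irreducible degree-$5$ equation over $\mathbb F_2$, so $\mathbb F_2(\theta)=\mathbb F_{2^5}$, and hence $\mathbb F_{2^m}(\theta)=\mathbb F_{2^{5m}}$. But we have just shown $\theta\in\mathbb F_{2^{2m}}$, forcing $\mathbb F_{2^{5m}}\subseteq\mathbb F_{2^{2m}}$, i.e.\ $5m\mid 2m$, which is impossible. More carefully, since $x^5+x^2+1$ is the minimal polynomial of $\theta$ over $\mathbb F_2$ and is irreducible of degree $5$, the degree of $\theta$ over any subfield $\mathbb F_{2^k}$ is $5/\gcd(5,k)$; with $\gcd(m,5)=1$ this degree is $5$ over $\mathbb F_{2^m}$, contradicting the fact that $\theta$ satisfies a quadratic over $\mathbb F_{2^m}$. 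This contradiction shows no such $(x,y)$ exists.

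I expect the main (very minor) obstacle to be stating the degree argument cleanly: one must invoke that $x^5+x^2+1$ is the minimal polynomial of each $\theta\in H$ over $\mathbb F_2$ (which follows from irreducibility), then use \cite[Thm.~2.14]{lidl1997} or the standard fact that $[\mathbb F_{2^m}(\theta):\mathbb F_{2^m}]=\deg_{\mathbb F_2}(\theta)/\gcd(\deg_{\mathbb F_2}(\theta),m)=5$ when $\gcd(m,5)=1$, so $\theta\notin\mathbb F_{2^{2m}}$. Alternatively, and perhaps most transparently, one can argue directly that $\theta^{2^m}$ is again a root of $x^5+x^2+1$ (Frobenius permutes $H$), and if $\theta$ satisfied $\theta^2+x\theta+y=0$ with $x,y\in\mathbb F_{2^m}$ then $\theta^{2^m}$ would satisfy the same quadratic, so $\theta^{2^m}\in\{\theta,\theta'\}$ where $\theta'$ is the other root; tracking how Frobenius acts on the five roots (a $5$-cycle, since the action of $\langle 2\rangle$ on $H$ is transitive) shows $\theta^{2^m}$ is $\theta$ raised to a power that is a primitive-enough element of $\mathbb Z/5$, and one checks this cannot land in a $2$-element orbit. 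Either route closes the proof; I would present the degree-of-extension version as the primary argument for brevity.
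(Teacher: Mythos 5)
Your proposal is correct and follows essentially the same route as the paper: both reduce via Lemma \ref{lem4} to showing that no linear factor $x+\theta^{-1}y+\theta$ can vanish for $x,y\in\mathbb F_{2^m}$, both conclude that such a vanishing would force $\theta\in\mathbb F_{2^{2m}}$, and both derive a contradiction from $\theta$ having degree $5$ over $\mathbb F_2$ with $\gcd(m,5)=1$. Your observation that $x+\theta^{-1}y+\theta=0$ is just the quadratic $\theta^2+x\theta+y=0$ over $\mathbb F_{2^m}$ is a small but genuine streamlining: it reaches $\theta\in\mathbb F_{2^{2m}}$ in one line, whereas the paper applies the Frobenius $z\mapsto z^{2^m}$ to the linear relation, checks $\theta^{-1}\ne\theta^{-2^m}$, and solves the resulting $2\times 2$ system for $x$ and $y$ before invoking $x^{2^m}=x$.
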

\begin{proof}
 According to Lemma \ref{lem4}, it suffices to show that
 \begin{equation}\label{e15}
x+\theta^{-1}y+\theta=0
 \end{equation}
cannot hold for $x,y\in \mathbb F_{2^{m}}$, $\theta\in \mathbb F_{2^{5m}}$ with $\theta^5+\theta^2+1=0$.

Suppose that $x,y\in \mathbb F_{2^{m}}$, then taking $2^m$-th power on both sides of (\ref{e15}) gives
 \begin{equation}\label{e16} 
x+\theta^{-2^m}y+\theta^{2^m}=0.
 \end{equation}
Notice that $\theta^{-1}\ne \theta^{-2^m}$. Otherwise we have $\theta\in\ftwom$, a contradiction to the facts that $\gcd(m,5)=1$ and $\theta\in \mathbb F_{2^{5m}}$ satisfying $\theta^5+\theta^2+1=0$. Then, by (\ref{e15}) and (\ref{e16}), one can obtain that
\[
\left\{\begin{aligned}
&x=\theta+\theta^{2^m},\\
&y=\frac{\theta+\theta^{2^m}}{\theta^{-1}+\theta^{-2^m}}.\\
\end{aligned}\right.
\]\\
Again by $x\in \mathbb F_{2^{m}}$, i.e., $x^{2^m}=x$, one has
\[ (\theta+\theta^{2^m})^{2^m}=\theta+\theta^{2^m},\]
then $\theta\in \mathbb F_{2^{2m}}$, which leads to
 \[\theta\in \mathbb F_{2^{2m}}\cap\mathbb F_{2^{5m}}=\ftwom,\]
 a contradiction to $\theta^5+\theta^2+1=0$ and $\gcd(m,5)=1$. This completes the proof.
\end{proof}

\section{Conclusion}\label{sec-4}

In this paper, by analyzing the possible quadratic factors of an $11$-th degree polynomial over the finite field $\ftwon$, a conjecture on permutation trinomials over $\ftwon[x]$ proposed by Deng and Zheng in \cite{DZ2018}  was settled.

\end{document}